\useunder{\uline}{\ul}{}
\newtheorem{theorem}{Theorem}
\newtheorem{example}{Example}
\newtheorem{definition}{Definition}
\newtheorem{observation}{Observation}
\def\BibTeX{{\rm B\kern-.05em{\sc i\kern-.025em b}\kern-.08em
    T\kern-.1667em\lower.7ex\hbox{E}\kern-.125emX}}
\begin{document}
\title{MONA: An Efficient and Scalable Strategy for Targeted $k$-Nodes Collapse}



\author{
  Yuqian~Lv,
  Bo~Zhou,
  Jinhuan~Wang,
  Shanqing Yu,
  and~Qi~Xuan,~\IEEEmembership{Senior~Member,~IEEE}
  \thanks{
    This work was supported in part by the Key R\&D Program of Zhejiang under Grants 2022C01018 and 2021C01117, by the National Natural Science Foundation of China under Grants 62103374, 61973273 and U21B2001, by the National Key R\&D Program of China under Grants 2020YFB1006104, and by the Major Key Project of PCL under Grants PCL2022A03, PCL2021A02, and PCL2021A09.
  }
  \thanks{
    Y. Lv, B. Zhou, J. Wang and S. Yu are with the Institute of Cyberspace Security, College of Information Engineering, Zhejiang University of Technology, Hangzhou 310023, China.
  }
  \thanks{
    B. Zhou is also with the Department of Intelligent Control, Zhejiang Institute of Communications, Hangzhou 311112, China.
  }
  \thanks{
    Q. Xuan is with the Institute of Cyberspace Security, College of Information Engineering, Zhejiang University of Technology, Hangzhou 310023, China, with the PCL Research Center of Networks and Communications, Peng Cheng Laboratory, Shenzhen 518000, China, and also with the Utron Technology Co., Ltd. (as Hangzhou Qianjiang Distinguished Expert), Hangzhou 310056, China (e-mail: xuanqi@zjut.edu.cn).
  }
}
\maketitle

\begin{abstract}

  The concept of $k$-core plays an important role in measuring the cohesiveness and engagement of a network. And recent studies have shown the vulnerability of $k$-core under adversarial attacks. However, there are few researchers concentrating on the vulnerability of individual nodes within $k$-core. Therefore, in this paper, we attempt to study \underline{T}argeted $k$-\underline{N}ode\underline{s} \underline{C}ollapse \underline{P}roblem (TNsCP), which focuses on removing a minimal size set of edges to make multiple target $k$-nodes collapse. For this purpose, we first propose a novel algorithm named MOD for candidate reduction. Then we introduce an efficient strategy named MONA, based on MOD, to address TNsCP. Extensive experiments validate the effectiveness and scalability of MONA compared to several baselines. An open-source implementation is available at \emph{\url{https://github.com/Yocenly/MONA}}.

\end{abstract}

\begin{IEEEkeywords}
  $k$-core decomposition, $k$-core robustness, Adversarial attack, Edge removal, Graph data mining.
\end{IEEEkeywords}

\section{Introduction}

\IEEEPARstart{N}{etworks} or graphs have become integral parts of our daily lives, playing significant roles in various complex systems, e.g., social networks, biological networks, and transport networks. Due to simplicity and efficiency, the concept of $k$-core has gained prominence as a crucial metric for capturing the structural engagement of networks. The $k$-core is a maximal induced subgraph where each node has its degree of at least $k$. The presence of nodes in $k$-core depends on their neighboring relationships in the subgraph. Thus, removing a subset of nodes or edges from the $k$-core may result in the detachment of other nodes within it. For example, Zhou et al. \cite{zhou2022attacking} discovered that the removal of a small number of edges may severely devastate the structure of the $k$-core. And Chen et al. \cite{chen2021edge} focused on the $k$-core minimization problem and proposed effective algorithms to cover it. Medya et al. \cite{ijcai2020p480} employed the \textit{Shapley} value, a cooperative game-theoretic concept, to address the problem of $k$-core minimization.



\textbf{Motivations.} Despite the effectiveness of above mentioned methods, they all attack the $k$-core from a global perspective. They pay little attention to the perturbations that affect individual nodes in the $k$-core. According to the research of Zhang et al. \cite{zhang2017finding}, the removal of critical nodes in $k$-core may significantly break down network engagement. However, the perturbations of removing nodes are infeasible in practice and may result in detectable consequences. Thus, for the sake of concealment and feasibility, we make the attempt to identify the critical edges whose removals will trigger the detachment of given target nodes from the $k$-core. In other words, given a set of target nodes in $k$-core, our objective is to remove a minimal-size set of edges such that all target nodes are absent from $k$-core. We refer to this problem as \underline{T}argeted $k$-\underline{N}ode\underline{s} \underline{C}ollapse \underline{P}roblem (TNsCP).

\begin{figure}[t]
  \centering
  \includegraphics[width=0.96\linewidth]{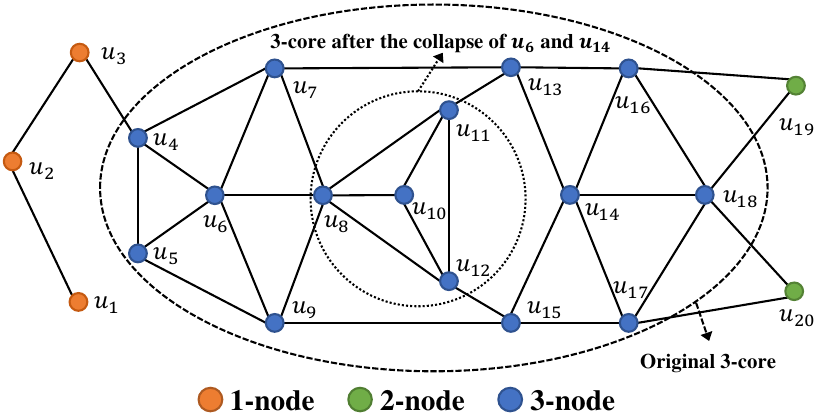}
  \caption{Example graph for introducing motivations.}
  \label{fig: motivation}
\end{figure}

\begin{example}
  We assume a social network in Figure. \ref{fig: motivation}, where nodes and edges represent the users and their relationships. Users with a high significance usually exert great attraction to others. In the $k$-core model, the significance of users is assessed by the $k$-core they belong to, e.g., users in $3$-core are usually more important than those just in $2$-core. It can be realized that the absence of $u_{6}$ and $u_{14}$ from $3$-core will greatly reduce the number of users in $3$-core, from 15 users to 4 users. With prior knowledge, we find that the breakdown of only two relationships, for example ($u_4$, $u_7$) and ($u_{16}$, $u_{18}$), will truly achieve this purpose.
\end{example}

Research on TNsCP contributes to the understanding of node vulnerability in the $k$-core model and enables maintainers to safeguard critical relationships among users. Furthermore, this study helps to diminish the importance of confidential users, making them less noticeable to potential attackers.

\textbf{Contributions.} In this paper, we provide the general definition of TNsCP with proof of its NP-hardness. Then, we introduce a novel algorithm named MOD for effective candidate reduction. Furthermore, we propose a novel heuristic algorithm named MONA to address TNsCP. We exhibit the effectiveness of MONA compared with several baselines. Finally, we show the high scalability of MONA compared with those global attack methods.


\section{Preliminaries and Problem Statement}
In this paper, we denote a graph as $G=(V,E)$, where $V$ and $E$ represent the sets of nodes and edges respectively. Note that we only focus on unweighted and undirected graphs without self-loops. We use $d_{(u, G)}$ and $N_{(u, G)}$ to represent the degree and neighbors of $u$ in $G$, respectively. In this section, we introduce some essential definitions and concepts.


\begin{definition}
  \label{def: k-core}
  \textbf{$k$-core.} Given a graph $G$ and a positive integer $k$, its $k$-core, denoted as $G_k=(V_k, E_k)$ where $V_k\subseteq V$ and $E_k\subseteq E$, means the maximal induced subgraph where each node $u \in G_k$ occupies at least $k$ neighbors.
\end{definition}

\begin{definition}
  \label{def: core-number}
  \textbf{Core Number.} Given a graph $G$ and a node $u \in G$, the core number of $u$, denoted as $C_{(u, G)}$, is the largest $k$, such that $u \in G_k$ while $u \notin G_{k+1}$.
\end{definition}

In this work, nodes with the same core number of $k$ are called \textbf{$k$-nodes}. And if the core number of a node $u$ decreases after edge removal, we refer to this event as the \textbf{collapse} of $u$. Generally, once a set of edges $\mathcal{E}$ is removed from $G$, it may lead to the collapse of multiple nodes in $G$ due to the domino phenomenon of the $k$-core \cite{goltsev2006k}. These nodes are called followers of $\mathcal{E}$ in $G$, denoted as $\mathcal{F}(\mathcal{E}, G)$. Assuming $G^\prime=G\setminus \mathcal{E}=(V^\prime, E^\prime)$, we obtain $\mathcal{F}(\mathcal{E}, G) = V_k\setminus V^\prime_k$.

\vspace{5px} 
\noindent\textbf{Problem Statement.} Given a graph $G$ and a set of target $k$-nodes $\mathcal{T} \in V_k \setminus V_{k+1}$, the \underline{T}argeted $k$-\underline{N}ode\underline{s} \underline{C}ollapse \underline{P}roblem (TNsCP) aims to remove a minimal-size set of edges $\mathcal{E} \in E$ to make all nodes in $\mathcal{T}$ collapse, i.e., $\mathcal{T} \subseteq \mathcal{F}(\mathcal{E}, G)$.


\begin{theorem}
  The TNsCP is NP-hard.
\end{theorem}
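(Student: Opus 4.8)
The plan is to prove NP-hardness by a polynomial-time reduction from the \emph{Set Cover} decision problem --- given a ground set $U=\{x_1,\dots,x_n\}$, a family $\mathcal{S}=\{S_1,\dots,S_m\}$ of subsets of $U$, and an integer $b$, decide whether some $b$ members of $\mathcal{S}$ cover $U$ --- which stays NP-complete even if we insist that every singleton $\{x_i\}$ also lies in $\mathcal{S}$ (e.g.\ from \textsc{Dominating Set}: adding all singletons $\{v\}$ to the sets $\{N[v]\}$ leaves the optimum, the minimum dominating set, unchanged since $\{v\}\subseteq N[v]$). From such an instance I would build, in polynomial time, a graph $G$, a core level $k$ (which we are free to fix as part of the construction), and a target set $\mathcal{T}\subseteq V_k\setminus V_{k+1}$ so that TNsCP admits an edge set of size $\le b$ \emph{iff} $U$ has a cover of size $\le b$; since the decision version of TNsCP is clearly in NP (verify a candidate by one $k$-core decomposition of $G\setminus\mathcal{E}$), this yields NP-completeness and hence the theorem.

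The graph $G$ is assembled from three ingredients. First, a \emph{rigid backbone} $Q$: a clique (or comparably dense block) on $\Theta(k+m+n)$ vertices, taken large enough that every backbone vertex has core number far above $k$ and that no $b$ edge deletions can push any of them below the $k$-core --- so $Q$ is effectively indestructible within the budget. Second, for each element $x_i$ a \emph{target vertex} $t_i$, wired to $Q$ and to the set-gadgets so that its core number is \emph{exactly} $k$ (hence $t_i\in V_k\setminus V_{k+1}$, as the problem requires), arranged so that any ``free'' single-edge attack on $t_i$ corresponds to selecting the singleton $\{x_i\}$. Third, for each set $S_j$ a \emph{selector gadget} carrying one distinguished \emph{fuse edge} $e_j$ such that (i) deleting $e_j$ collapses the gadget and, through a short auxiliary chain, strips exactly one $k$-core neighbour from every $t_i$ with $x_i\in S_j$, thereby collapsing precisely those targets; (ii) the gadget has enough edges into $Q$ that the \emph{return wave} from those collapsing targets is absorbed by the backbone and never topples the gadget itself or any other selector; and (iii) no single edge inside the gadget triggers a partial collapse cheaper than what $e_j$ already achieves (an ``all-or-nothing'' gadget). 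Finally set $\mathcal{T}=\{t_1,\dots,t_n\}$.

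Granting such a gadget, the equivalence is routine. \textbf{Completeness}: if $S_{j_1},\dots,S_{j_b}$ cover $U$, delete $\mathcal{E}=\{e_{j_1},\dots,e_{j_b}\}$; every $x_i$, hence every $t_i$, lies in some chosen set, so $\mathcal{T}\subseteq\mathcal{F}(\mathcal{E},G)$ with $|\mathcal{E}|=b$. \textbf{Soundness}: given $\mathcal{E}$ with $|\mathcal{E}|\le b$ and $\mathcal{T}\subseteq\mathcal{F}(\mathcal{E},G)$, an exchange argument normalises $\mathcal{E}$ so that it contains only fuse edges and direct target hits --- an edge incident to $Q$ is useless or is beaten by substituting an appropriate fuse (backbone robustness), an edge inside a selector is dominated by that selector's fuse (property (iii)), and an edge incident to $t_i$ is charged to the singleton $\{x_i\}$ --- after which the family $\{S_j:e_j\in\mathcal{E}\}\cup\{\{x_i\}: t_i \text{ directly hit}\}$ has size $\le b$ and must cover $U$: an uncovered $x_i$ would, by property (ii), leave $t_i$ with all $k$ of its original $k$-core neighbours, contradicting $t_i\in\mathcal{F}(\mathcal{E},G)$.

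The crux, and the step I expect to be the main obstacle, is the gadget design: \emph{taming the $k$-core domino effect}. Each selector must be at once \emph{fragile} --- a single fuse deletion fires the forward cascade onto exactly $S_j$'s targets --- and \emph{robust} --- incidental loss of neighbouring targets must not fire it, so that collapses stay local and do not bleed across selectors, since an uncontrolled return wave would collapse targets of unchosen sets and break the correspondence with set cover. Reconciling these two demands, while simultaneously pinning every target's core number to exactly $k$, forbidding partial attacks, and certifying that the backbone survives any $b$ deletions, all require careful (though ultimately routine) analysis of $k$-core decompositions; once the gadget is nailed down, the rest is bookkeeping.
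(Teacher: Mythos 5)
Your proposal runs the reduction in the opposite direction from the paper's argument: the paper takes an arbitrary TNsCP instance and re-describes it as a covering problem over the candidate edges via the follower function $\mathcal{F}(\cdot,G)$, whereas you start from an arbitrary Set Cover instance and build a graph, a level $k$, and a target set whose optimal attack encodes the optimal cover. That direction is the standard one for a hardness proof, and your singleton observation is a genuinely good catch: since every target has exactly $k$ neighbours in the $k$-core, any single incident edge is a one-edge kill, so the correspondence can only be with Set Cover instances closed under singletons (which, as you note, remain NP-hard via Dominating Set). None of this, however, makes the proposal a proof.

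The gap is the selector gadget, and it is not a routine detail you may ``grant.'' Every step of your completeness and soundness arguments is conditional on a gadget that is simultaneously fragile (one fuse deletion fires a cascade onto exactly the targets of $S_j$), robust (the return wave from collapsing targets is absorbed and never topples this or any other selector), all-or-nothing (no internal edge beats the fuse), and compatible with pinning every $t_i$ at core number exactly $k$; you also need a backbone certified to survive any $b$ deletions. For a cascade process such as $k$-core collapse, constructing an object with these conflicting properties \emph{is} the theorem --- the targets themselves are the conduits of the return wave (each $t_i$ is adjacent to one vertex per set containing $x_i$, so its collapse strips a neighbour from every \emph{unchosen} selector containing $x_i$), and arranging that those selectors do not fire while the chosen one still does is exactly the tension you defer. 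Two further points your sketch glosses over: followers are superadditive ($\mathcal{F}(\mathcal{E},G)$ can strictly contain $\bigcup_{e\in\mathcal{E}}\mathcal{F}(\{e\},G)$), so property (iii) and the exchange argument must be stated for \emph{subsets} of up to $b$ internal edges, not single edges; and the per-target degree padding into $Q$ needed to force $d_{(t_i,G_k)}=k$ varies with the frequency of $x_i$ and must be checked not to create new cascade paths between selectors. Until the gadget is written down and these properties verified, what you have is a correct blueprint, not a proof.
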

\begin{proof}
  We try to reduce the Set Covering Problem (SCP) to the TNsCP. In the SCP, two sets are given: a universe set $U$, and a set $S$ that contains subsets of $U$. The union of all subsets in $S$ could cover $U$. Then the SCP concerns finding the smallest number of subsets from $S$ to cover the entire universe \cite{GROSSMAN199781}.

  Here, for a given graph $G$ and a set of target $k$-nodes $\mathcal{T}$, we define $U = \mathcal{T}$ and $S = E$. It is important to note that $S$ itself does not represent the set of subsets of $U$. Therefore, we utilize the follower function $\mathcal{F}(\cdot, G)$ to map $S$ to $U$. For example, for an element $s \in S$, we have $\mathcal{F}(s, G) \cap U \subseteq U$. It becomes evident that $\mathcal{F}(S, G) \cap U = U$. After that, the TNsCP aims to find a minimal-size set $\mathcal{E} \subseteq S$ such that $\mathcal{F}(\mathcal{E}, G) \cap U = U$. Thus, we have successfully reduced the SCP to TNsCP. Since the NP-hardness of the SCP \cite{GROSSMAN199781}, it follows that the TNsCP is also NP-hard.
\end{proof}

\begin{theorem}
  \label{the: removal-affect1}
  Given a graph $G$ and an edge $(u, v) \in E$. Removal of $(u, v)$ could make the $k$-nodes collapse if and only if $min(d_{(u, G_k)}, d_{(v, G_k)}) = k$.
\end{theorem}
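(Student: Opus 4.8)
The plan is to reduce the claim to two elementary facts about the $k$-core and then carry out a short degree count in each direction. Throughout, write $G^\prime = G \setminus (u,v) = (V^\prime, E^\prime)$ and let $G^\prime_k=(V^\prime_k, E^\prime_k)$ be its $k$-core. The first fact is monotonicity of the core under edge deletion (the domino phenomenon already invoked in the paper): since the subgraph of $G$ induced on $V^\prime_k$ contains the subgraph of $G^\prime$ induced on $V^\prime_k$, it still has minimum degree at least $k$, so maximality of the $k$-core gives $V^\prime_k \subseteq V_k$. The second fact is that, because $G_k$ is the \emph{maximal} induced subgraph of minimum degree at least $k$, a node $w$ satisfies $w\in V_k$ if and only if $d_{(w,G_k)}\ge k$ (having $k$ or more neighbors inside $V_k$ would force $w$ into the core by maximality). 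Consequently, if one endpoint of $(u,v)$, say $u$, lies outside $V_k$, then $d_{(u,G_k)}<k$, so the right-hand side of the theorem fails; and since $(u,v)$ is then not an edge of $G_k$, the subgraph $G_k$ is untouched by the deletion, giving $V_k\subseteq V^\prime_k$ and hence no collapse of any $k$-node. Both sides are false in that case, so from now on I may assume $u,v\in V_k$, whence $(u,v)\in E_k$ and $d_{(u,G_k)}, d_{(v,G_k)}\ge k$ automatically.

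For the ``if'' direction, assume $\min(d_{(u,G_k)}, d_{(v,G_k)})=k$ and, without loss of generality, $d_{(u,G_k)}=k$. Then $u\in V_k$, while $u\notin V_{k+1}$ because otherwise $d_{(u,G_k)}\ge d_{(u,G_{k+1})}\ge k+1$; hence $C_{(u,G)}=k$, i.e., $u$ is a $k$-node. Since $v\in V_k$ is one of the exactly $k$ neighbors of $u$ inside $V_k$, after the deletion $u$ has only $k-1$ neighbors inside $V_k$, and therefore at most $k-1$ neighbors inside $V^\prime_k\subseteq V_k$. Thus $u\notin V^\prime_k$, so $C_{(u,G^\prime)}\le k-1 < C_{(u,G)}$: the $k$-node $u$ collapses.

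For the ``only if'' direction, suppose some $k$-node $w$ collapses, so $w\in V_k\setminus V^\prime_k$, and assume for contradiction that $d_{(u,G_k)}\ge k+1$ and $d_{(v,G_k)}\ge k+1$. In $G_k\setminus(u,v)$ both $u$ and $v$ still have degree at least $k$ and every other vertex of $V_k$ is unaffected, so this graph---which is exactly the subgraph of $G^\prime$ induced on $V_k$---has minimum degree at least $k$. Maximality of the $k$-core then gives $V_k\subseteq V^\prime_k$, contradicting $w\notin V^\prime_k$. Hence $\min(d_{(u,G_k)}, d_{(v,G_k)})=k$.

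The one inference I would write out most carefully is ``$u$ loses a neighbor inside $V_k$, therefore $u$ drops out of $V^\prime_k$'', since it silently combines $V^\prime_k\subseteq V_k$ with the observation that $G^\prime$ restricted to $V_k$ equals $G_k$ minus the single edge $(u,v)$. I would therefore state the monotonicity $V^\prime_k\subseteq V_k$ as a labeled observation up front so that both directions can cite it directly; once that is in place, the remainder is just the degree bookkeeping above.
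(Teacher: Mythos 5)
Your proof is correct, but there is nothing in the paper to compare it against: the authors explicitly omit the proofs of Theorem~\ref{the: removal-affect1} and Theorem~\ref{the: removal-affect2} and defer to the cited prior work, so your argument is necessarily a ``different route'' in the trivial sense that it is the only route on the page. On its merits, the argument is sound and self-contained: the monotonicity $V^\prime_k \subseteq V_k$ under edge deletion and the maximality characterization ($w \in V_k$ if and only if $w$ has at least $k$ neighbors inside $V_k$) are both justified correctly, the ``if'' direction correctly identifies the degree-$k$ endpoint as a $k$-node that must leave the core, and the ``only if'' direction correctly shows that when both endpoints have degree at least $k+1$ in $G_k$ the induced subgraph on $V_k$ survives in $G^\prime$, so no $k$-node collapses. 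The one point worth making explicit is notational: $d_{(u,G_k)}$ is literally defined only for $u \in V_k$, and you resolve this sensibly by reading it as the number of neighbors of $u$ inside $V_k$ and disposing of the case where an endpoint lies outside the core (both sides of the equivalence fail there, so the biconditional is preserved). Your suggestion to promote the monotonicity statement to a labeled observation is exactly right, since both directions lean on it.
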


\begin{theorem}
  \label{the: removal-affect2}
  Given a graph $G$ and an edge $(u, v) \in E$. Assume that $k = C_{(u, G)} \leq C_{(v, G)}$, only those nodes with the core number of $k$ may collapse after the removal of $(u, v)$, and their core number will decrease at most 1.
\end{theorem}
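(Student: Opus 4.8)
The plan is to reduce the statement to a single claim about the vertex sets of cores. Write $G' := G \setminus (u,v)$, let $V_j$ be the vertex set of the $j$-core $G_j$, and let $V'_j$ be the vertex set of the $j$-core of $G'$. I would prove: $V'_j = V_j$ for every level $j \ne k$, while $V'_k \subseteq V_k$. Granting this, both conclusions of the theorem follow by reading off core numbers. For a node $w$ with $C_{(w,G)} = m$: if $m \ne k$ then $w \in V_m = V'_m$ forces $C_{(w,G')} \ge m$, and since deleting an edge never increases a core number we get $C_{(w,G')} = m$, so $w$ does not collapse; this already gives the first assertion. If $m = k$, then $w \in V_k \subseteq V_{k-1} = V'_{k-1}$ gives $C_{(w,G')} \ge k-1$ (with the convention that the $0$-core is the whole vertex set, the case $k=1$ is immediate), while $w \notin V_{k+1} = V'_{k+1}$ gives $C_{(w,G')} \le k$; hence $w$'s new core number is either $k$ (no collapse) or $k-1$ (a collapse of exactly one), which is the second assertion.

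For the key claim I would first record the standard monotonicity fact that deleting edges can only shrink cores: since $G_j$ is characterized as the maximum subgraph of $G$ in which every node has degree at least $j$, any such subgraph of $G' \subseteq G$ is also one of $G$, so $V'_j \subseteq V_j$ for all $j$, and $C_{(w,G')} \le C_{(w,G)}$ for all $w$. It then remains to show $V_j \subseteq V'_j$ for $j \ne k$, and here the argument splits. For $j > k$: because $C_{(u,G)} = k < j$ we have $u \notin V_j$, so the deleted edge $(u,v)$ is not an edge of $G_j$ at all; hence $G_j$ is itself a subgraph of $G'$ in which every node has degree $\ge j$, forcing $V_j \subseteq V'_j$, and therefore $V'_j = V_j$.

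For $j < k$ the situation is different: now both $u$ and $v$ lie in $V_j$, so $(u,v) \in E(G_j)$ and the previous trick does not apply. Instead I use the nesting $G_k \subseteq G_j$. Since $u, v \in G_k$ (as $k \le C_{(u,G)}$ and $k \le C_{(v,G)}$), we have $d_{(u,G_j)} \ge d_{(u,G_k)} \ge k \ge j+1$ and likewise for $v$, so deleting $(u,v)$ leaves the induced subgraph on $V_j$ with minimum degree still $\ge j$; thus $G_j \setminus (u,v)$ is a subgraph of $G'$ with min degree $\ge j$, giving $V_j \subseteq V'_j$ and hence $V'_j = V_j$ once more.

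I expect the $j < k$ case to be the main obstacle, or at least the step most easily botched: the instinct carried over from the $j > k$ case is to argue that the removed edge "does not touch" the lower cores, but $(u,v)$ genuinely lies inside $G_j$ for every $j < k$. What rescues the argument is not the absence of the edge but the fact that both of its endpoints carry surplus degree in $G_j$, inherited from membership in the deeper core $G_k$. The only other point requiring care is the bookkeeping distinction between "the $j$-core as an edge-subgraph" and "its vertex set" — $G'_j$ and $G_j$ can legitimately differ by the single edge $(u,v)$ while sharing the same vertex set — together with a clean appeal to the (standard) monotonicity of $k$-cores under edge deletion instead of re-deriving it.
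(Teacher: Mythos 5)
Your proof is correct. Note that the paper itself gives no argument for this theorem: it explicitly omits the proof and defers to the cited prior work on core maintenance under edge deletion, so there is no in-paper proof to compare against. Your self-contained derivation is sound, and the case split is exactly where the content lies: for $j>k$ the deleted edge is simply absent from $G_j$ because $u\notin V_j$, while for $j<k$ the edge does lie in $G_j$ but both endpoints carry surplus degree there ($d_{(u,G_j)}\ge d_{(u,G_k)}\ge k\ge j+1$, using the nesting $G_k\subseteq G_j$), so $G_j\setminus\{(u,v)\}$ still has minimum degree at least $j$ and the $j$-core's vertex set is preserved. Combined with the monotonicity $V'_j\subseteq V_j$, this pins every core number except possibly those equal to $k$, and the containment $V_k\subseteq V'_{k-1}$ bounds the drop by one. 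The only step you leave implicit is the passage from ``$G'$ contains a subgraph on $S$ with minimum degree $\ge j$'' to ``$S\subseteq V'_j$,'' which rests on the standard fact that the union of two vertex sets inducing minimum degree $\ge j$ again induces minimum degree $\ge j$; that is routine and worth one sentence, not a gap.
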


The proofs of Theorem \ref{the: removal-affect1} and Theorem \ref{the: removal-affect2} are omitted here because they have already been established by \cite{10.1145/3269206.3269254} and by \cite{sariyuce2013streaming, li2013efficient}, respectively. Then, on the basis of Theorem \ref{the: removal-affect1} and Theorem \ref{the: removal-affect2}, we have the following observation.

\begin{observation}
  \label{obs: candidate}
  Those edges included in $\mathcal{P} = \{(u, v) | (u, v) \in E, min(C_{(u, G)}, C_{(v, G)}) = k\}$ are what matter in the collapse of $k$-nodes.
\end{observation}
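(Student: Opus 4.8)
The plan is to obtain the observation as an immediate corollary of Theorem~\ref{the: removal-affect2} and Theorem~\ref{the: removal-affect1}, rather than proving anything from first principles. Two things need to be shown: (i) deleting any edge \emph{outside} $\mathcal{P}$ cannot collapse a single $k$-node, so such edges are irrelevant to the collapse of $k$-nodes; and (ii) within $\mathcal{P}$, Theorem~\ref{the: removal-affect1} identifies exactly which removals are effective, so $\mathcal{P}$ is the correct pool of candidate edges.

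For (i), I would take an arbitrary edge $(u,v)\in E\setminus\mathcal{P}$ and set $k' = \min(C_{(u,G)},C_{(v,G)})$, so that $k'\neq k$. Relabelling the endpoints so that $C_{(u,G)}\le C_{(v,G)}$, Theorem~\ref{the: removal-affect2} says that after deleting $(u,v)$ only nodes with core number exactly $k' = C_{(u,G)}$ may change their core number. Splitting into the cases $k'<k$ and $k'>k$, in neither case is a node of core number $k$ among the affected nodes; hence every $k$-node keeps its core number and does not collapse. The contrapositive is precisely the desired statement: if removing $(u,v)$ collapses some $k$-node, then $\min(C_{(u,G)},C_{(v,G)})=k$, i.e.\ $(u,v)\in\mathcal{P}$. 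As a consequence, any edge set that solves TNsCP on $k$-node targets may be intersected with $\mathcal{P}$ without affecting feasibility, since the discarded edges are redundant.

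For (ii), I would observe that both endpoints of an edge in $\mathcal{P}$ have core number at least $k$ and therefore lie in $G_k$, so $d_{(u,G_k)}$ and $d_{(v,G_k)}$ are well defined (and at least $k$); Theorem~\ref{the: removal-affect1} then states that such a removal actually collapses $k$-nodes exactly when $\min(d_{(u,G_k)},d_{(v,G_k)})=k$. Thus $\mathcal{P}$ contains every edge that can matter, and the effective subset of $\mathcal{P}$ is recognised by a purely local degree test inside $G_k$ --- which is what a candidate-reduction routine such as MOD exploits.

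I do not expect a real obstacle here: the two cited theorems do the work. The only points needing care are making the case split $k'<k$ versus $k'>k$ explicit so that both are seen to exclude $k$-nodes, and noting that the statement concerns single-edge removals in $G$; when edges are deleted sequentially the same argument must be re-applied to the current graph (with $\mathcal{P}$ recomputed), but as phrased the observation is about $G$ only.
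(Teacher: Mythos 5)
Your argument is correct for what it actually proves, but it runs in the opposite direction from the paper's. You use Theorem~\ref{the: removal-affect2} to show that a single edge \emph{outside} $\mathcal{P}$ cannot collapse any $k$-node, i.e.\ $\mathcal{P}$ loses nothing (a necessity-style exclusion), and then invoke Theorem~\ref{the: removal-affect1} as the local test inside $\mathcal{P}$. The paper instead proves a sufficiency statement: starting from any edge satisfying $\min(d_{(u,G_k)},d_{(v,G_k)})=k$ and iterating --- each removal collapses some $k$-nodes, which exposes new edges meeting the degree condition --- one can drive \emph{every} $k$-node out of the $k$-core, and every edge touched along the way lies in the original $\mathcal{P}$ (because each such edge always has one endpoint of current core number exactly $k$, and Theorem~\ref{the: removal-affect2} guarantees the cascade never disturbs nodes of core number above or below $k$). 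The two arguments are complementary: yours shows $\mathcal{P}$ is not too small a description of the irrelevant edges, the paper's shows $\mathcal{P}$ is large enough to contain a complete solution. Either reading is a defensible formalization of the informal phrase ``are what matter.''

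One concrete overreach to fix: the sentence claiming that any feasible edge set may be intersected with $\mathcal{P}$ without affecting feasibility does not follow from the single-edge theorems. Theorems~\ref{the: removal-affect1} and~\ref{the: removal-affect2} govern the removal of \emph{one} edge; a set of edges each with $\min(C_{(u,G)},C_{(v,G)})=k+1$ can jointly drop a $(k+1)$-node's degree below $k$, eject it from $G_k$, and thereby cascade into genuine $k$-nodes, even though every edge in the set lies outside $\mathcal{P}$. Your closing caveat about re-applying the argument to the current graph essentially concedes this, but the ``as a consequence'' sentence earlier asserts the set-level claim unconditionally. Either drop that sentence or restrict it to single-edge removals; the observation as the paper uses it (and proves it) only needs the sufficiency direction anyway.
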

\begin{proof}
  Given a graph $G$ and an edge $(u, v) \in E$ that satisfies $min(d_{(u, G_k)}, d_{(v, G_k)}) = k$, the followers of $(u, v)$ are the subset of $k$ nodes, that is, $\mathcal{F}(\{(u, v)\}, G) \subseteq V_{k} \setminus V_{k+1}$. After the removal of $(u, v)$, all these followers will absolutely collapse from $k$-core to $(k-1)$-core. Additionally, the collapse of $\mathcal{F}(\{(u, v)\}, G)$ will trigger the emergence of more edges that satisfy Theorem \ref{the: removal-affect1}. Following the procedure, we can iteratively remove these compliant edges until no more edges satisfy Theorem \ref{the: removal-affect1} in the graph. In such a scenario, it implies that there are no nodes with the core number of $k$ in the graph, indicating the complete collapse of the $k$-nodes. And all the edges removed in this procedure are part of $\mathcal{P} = \{(u, v) | (u, v) \in E, min(C_{(u, G)}, C_{(v, G)}) = k\}$.
\end{proof}


Based on Observation \ref{obs: candidate}, we can initially narrow the candidate edges from $E$ to $\mathcal{P}$.

\section{Methodologies}
\label{sec: methodologies}

\subsection{Optimal Solution}

Assuming that we have prior knowledge of $\mathcal{F}(\{e\}, G)$, the removal of $\{e\}$ is the optimal solution when our target nodes are within $\mathcal{F}(\{e\}, G)$. Therefore, a naive solution for TNsCP is to exhaustively enumerate all possible combinations of edges in $\mathcal{P}$ and select the best for removal. As shown in Algorithm \ref{alg: optimal}, we explore each edge combination in $\mathcal{P}$, with its size ranging from 1 to $|\mathcal{P}|$. The time complexity of Algorithm \ref{alg: optimal} is $\mathcal{O}(\sum_{i=1}^{|\mathcal{P}|}{|\mathcal{P}| \choose i}|E_k|)$, which is extremely time-consuming and unscalable with increasing input size. Due to the constraints of computational resources, it is not feasible to enumerate the followers generated by all possible edge combinations. Therefore, identifying the candidate edges that may cause the collapse of the target $k$-nodes $\mathcal{T}$ poses a challenge.

\begin{algorithm}[t]   
  \caption{Optimal($G$, $\mathcal{T}$)}         
  \label{alg: optimal}
  \SetKwInOut{Input}{input}
  \SetKwInOut{Output}{output}
  \Input{given graph $G$, target $k$-nodes $\mathcal{T}$;}               
  \Output{removed edges $\mathcal{E}$.}              

  $\mathcal{P} \leftarrow \{(u, v) | (u, v) \in E, min(C_{(u, G)}, C_{(v, G)}) = k\}$\;
  \For{$iter \leftarrow 1$ to $|\mathcal{P}|$}
  {
    $combs\leftarrow$ All combinations of size $iter$ in $\mathcal{P}$\;
    \ForEach{$\mathcal{E}\in combs$}
    {
      \textbf{if} {$\mathcal{T}\subseteq\mathcal{F}(\mathcal{E}, G)$} \textbf{then} \Return{$\mathcal{E}$\;}
    }
  }
\end{algorithm}

\subsection{Improved Candidate Reduction}

Motivated by the above challenge, in this part, we present two novel algorithms to improve candidate reduction.


Given a $k$-node $u$, the collapse of $u$ may trigger the collapse of other $k$-nodes in its neighbors. Therefore, when our target nodes are among the collapsed neighbors of $u$, it is prioritized to make $u$ collapse. This necessitates further classification of $k$-nodes to distinguish them with different collapse orders. Inspired by \underline{O}nion \underline{D}ecomposition (OD) \cite{hebert2016multi}, we propose \underline{M}odified \underline{O}nion \underline{D}ecomposition (MOD) to divide $k$-nodes of $G$ into different layers, which are illustrated in Algorithm \ref{alg: mod}.

\begin{algorithm}[tbp]   
  \caption{MOD($G$, $k$)}         
  \label{alg: mod}
  \SetKwInOut{Input}{input}
  \SetKwInOut{Output}{output}
  \Input{given graph $G$, core number constraint $k$;}               
  \Output{modified onion distribution $\mathcal{M}$.}              

  $layer \leftarrow 0$; $\tilde{G} \leftarrow G_k$; $\mathcal{S} \leftarrow \{u | u \in G, C_{(u,G)} = k\}$\;
  \While{$\mathcal{S}$ is not empty}
  {
    $layer \leftarrow layer + 1$\;
    $equal \leftarrow \{u | u \in \tilde{G}, deg(u, \tilde{G}) = k\}$\;
    $lower \leftarrow \{u | u \in \tilde{G}, deg(u, \tilde{G}) < k\}$\;
    \textbf{if} {$|lower| > 0$} \textbf{then} $equal \leftarrow \emptyset$\;
    \ForEach{$v \in lower \cup equal$}
    {
      $\mathcal{M}_v \leftarrow layer$\;
    }
    $\mathcal{S} \leftarrow \mathcal{S} \setminus \{lower \cup equal\}$\;
    $\tilde{G} \leftarrow \tilde{G} \setminus \{lower \cup equal\}$\;
  }
  \Return{$\mathcal{M}$}\;
\end{algorithm}

\begin{figure}[t]
  \centering
  \includegraphics[width=0.96\linewidth]{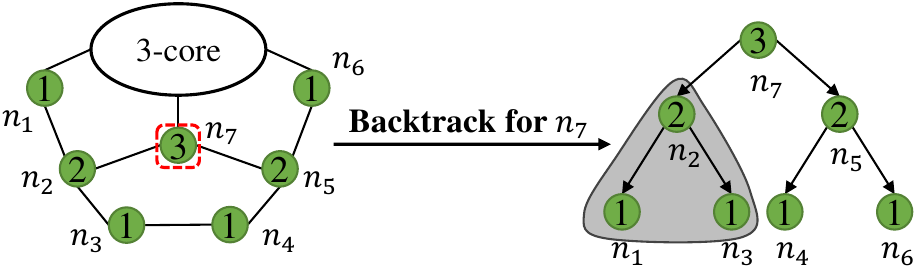}
  \caption{Example for MOD and BacktrackTree algorithms. All $2$-nodes are assigned by MOD (left) and $n_7$ is backtracked by BacktrackTree (right).}
  \label{fig: method_example}
\end{figure}

The main difference between OD and MOD is that OD removes all nodes with degrees less than or equal to $k$ in each iteration. For MOD, in Algorithm \ref{alg: mod}, we further divide these nodes into $equal$ and $lower$ (Lines 4-5). If there are nodes with degrees less than $k$, we set $equal$ to an empty set (Line 6) and then assign nodes in $lower$ with $layer$ (Lines 7-8). Otherwise, nodes in $equal$ are assigned. Then, we update $\mathcal{S}$ and $\tilde{G}$ (Lines 9-10). Finally, the modified onion distribution of $k$-nodes $\mathcal{M}$ is returned . The lower the layer of a $k$-node in $\mathcal{M}$, the higher its priority in the collapse order. For example, we execute Algorithm \ref{alg: mod} for $2$-nodes in Figure \ref{fig: method_example} (left). The number marked on each node represents its modified onion layer. In the first iteration, we remove $equal=\{n_1,n_3, n_4, n_6\}$ and assign them to $layer=1$. Then, $lower=\{n_2, n_5\}$ and $lower=\{n_7\}$ are removed in order and assigned with $layer=2$ and $layer=3$, respectively.


Given a target $k$-node $u$, we could prioritize the collapse of its neighbors with higher collapse orders according to $\mathcal{M}$. Furthermore, we can iteratively backtrack on those neighbors that maintain increasing collapse orders with respect to the current node. As shown in Algorithm \ref{alg: backtrack}, we exploit the Breadth First Search (BFS) algorithm for this backtracking procedure. In each iteration, we extract those neighbors with higher collapse orders, denoted as $neighbors$ (Lines 4-5). Then the BFS queue $queue$ is updated (line 6) and the edges that link the current node with its neighbors are added to $G_{BT}$ (line 7). Note that $G_{BT}$ is a directed graph initialized with nodes in $\mathcal{T}$ and empty edges (Line 2). For example, as shown in Figure \ref{fig: method_example} (right), we execute Algorithm \ref{alg: backtrack} with the setting of $n_7$ as the target. We obtain $G_{BT}$ with 7 nodes and 6 directed edges.


\begin{algorithm}[t]
  \caption{BacktrackTree($G$, $\mathcal{T}$)}         
  \label{alg: backtrack}
  \SetKwInOut{Input}{input}
  \SetKwInOut{Output}{output}
  \Input{given graph $G$, target $k$-nodes $\mathcal{T}$;}               
  \Output{backtrack tree $G_{BT}$.}              
  $\mathcal{M} \leftarrow$ \textbf{MOD}($G$, $k$); $queue \leftarrow \mathcal{T}$\;
  $G_{BT} \leftarrow$ a directed graph $(\mathcal{T}, \emptyset)$\;
  \ForEach{$u \in queue$}
  {
    $nodes \leftarrow \{v | v \in N(u, G_k), C_{(v, G_k)} = k\}$\;
    $neighbors \leftarrow \{v | v \in nodes, \mathcal{M}_v < \mathcal{M}_u\}$\;
    $queue \leftarrow queue \cup (neighbors \setminus V_{BT})$\;
    $G_{BT} \leftarrow G_{BT} \cup \{(u, v) | v \in neighbors\}$\;
  }
  \Return{$G_{BT}$}\;
\end{algorithm}

We can realize that those nodes in $V_{BT}$ are most probably related to the collapse of $\mathcal{T}$. And the removal of the edges in $E_{BT}$ will lead to the collapse of the nodes in $V_{BT}$. However, the edges in $E_{BT}$ are not sufficient to cover the collapse of $\mathcal{T}$ when nodes in $\mathcal{T}$ are assigned in $equal$ during Algorithm \ref{alg: mod}. Therefore, $E_{BT}$ is augmented by the adjacent edges of $\mathcal{T}$ in $G_k$. In this way, we obtain the set of candidate edges, denoted as $\mathcal{H} = E_{BT} \cup \{(u, v) | u \in \mathcal{T}, v \in N(u, G_k) \setminus V_{BT}\}$. It can be seen that $\mathcal{H} \subseteq \mathcal{P}$. Therefore, we further improve the candidates from $\mathcal{P}$ to $\mathcal{H}$.



\textbf{Complexity.} The time complexities of Algorithm \ref{alg: mod} and Algorithm \ref{alg: backtrack} are both $\mathcal{O}(|V_k \setminus V_{k+1}|)$. Additionally, both algorithms have a space complexity of $\mathcal{O}(|G_k|)$.

\subsection{MONA Algorithm}

Benefiting from the improved candidates $\mathcal{H}$, in this part, we propose a novel heuristic algorithm to tackle TNsCP, named \underline{M}odified \underline{O}nion based $k$-\underline{N}odes \underline{A}ttack (MONA).

\textbf{Pruned Followers.} Given an edge $e \in \mathcal{H}$, the removal of $e$ has two main impacts. First, as discussed before, the removal of $e$ can trigger the collapse of the nodes in $G$, that is, $\mathcal{F}(\{e\}, G)$. In addition, the removal of $e$ may result in the zero-degree of nodes in $G_{BT}$. These nodes will no longer appear in $G_{BT}$ during the backtracking process of Algorithm \ref{alg: backtrack}. This process is analogous to pruning a branch from a tree, causing all the leaves on that branch to detach from the tree. For example, as shown in Figure \ref{fig: method_example} (right), those nodes in the shadow area, i.e., $n_1$, $n_2$ and $n_3$, will absolutely be pruned after the removal of edge $(3, 2)$. These nodes will disappear in $G_{BT}$ during the re-backtrack process. We integrate these two impacts and introduce the concept of pruned followers $\mathcal{F}_p(e)$ to measure the impact of removing $e$. The operations for obtaining $\mathcal{F}_p(e)$ is presented in Algorithm \ref{alg: prune}. Note that we bypass those zero-degree nodes contained in $\mathcal{T}$.

We exhibit the details of the MONA algorithm in Algorithm \ref{alg: mona}. In Line 2 and Line 3, we first initialize $G_{BT}$ and $\mathcal{H}$. In Line 4 to Line 8, we iteratively remove the edge with the most pruned followers and then update $G_{BT}$ and $\mathcal{H}$ according to the adversarial graph $G_k \setminus \mathcal{E}$.

\begin{algorithm}[t]   
  \caption{PruneEdge($G$, $G_{BT}$, $e$, $\mathcal{T}$)}         
  \label{alg: prune}
  \SetKwInOut{Input}{input}
  \SetKwInOut{Output}{output}
  \Input{given graph $G$, backtrack tree $G_{BT}$, removed edge $e$, target $k$-nodes $\mathcal{T}$;}               
  \Output{pruned followers $\mathcal{F}_p(e)$.}              
  $\tilde{G} \leftarrow G_{BT} \setminus \{e\}$; $\mathcal{F}_p \leftarrow \mathcal{F}(\{e\}, G) \cap V_{BT}$\;
  \While{exist zero-indegree nodes in $\tilde{V} \setminus \mathcal{T}$}
  {
    $\mathcal{Z} \leftarrow $ All zero-indegree nodes in $\tilde{V} \setminus \mathcal{T}$\;
    $\tilde{G} \leftarrow \tilde{G} \setminus \mathcal{Z}$; $\mathcal{F}_p \leftarrow \mathcal{F}_p  \cup \mathcal{Z}$\;
  }
  \Return{$\mathcal{F}_p(e)$}\;
\end{algorithm}

\begin{algorithm}[t]   
  \caption{MONA($G$, $\mathcal{T}$)}         
  \label{alg: mona}
  \SetKwInOut{Input}{input}
  \SetKwInOut{Output}{output}
  \Input{given graph $G$, target $k$-nodes $\mathcal{T}$;}               
  \Output{removed edges $\mathcal{E}$.}              

  $\mathcal{E} \leftarrow \emptyset$; $G_{BT} \leftarrow$ \textbf{BacktrackTree}($G_k $, $\mathcal{T}$)\;
  $\mathcal{H} \leftarrow E_{BT} \cup \{(u, v) | u \in \mathcal{T}, v \in N(u, G_k) \setminus V_{BT}\}$\;

  \While{$\mathcal{T} \not\subseteq \mathcal{F}(\mathcal{E}, G)$}
  {
    \ForEach{$e \in \mathcal{H}$}
    {
      $\mathcal{F}_p(e) \leftarrow$ \textbf{PruneEdge}($G_k \setminus \mathcal{E}$, $G_{BT}$, $e$, $\mathcal{T}$)\;
    }

    $e^{\star} \leftarrow$ The edge with most pruned followers\;
    $\mathcal{E} \leftarrow \mathcal{E} \cup \{e^\star\}$; Update $G_{BT}$ and $\mathcal{H}$ with $G_k \setminus \mathcal{E}$\;

  }

  \Return{$\mathcal{E}$}\;
\end{algorithm}

\textbf{Complexity.} The time complexities of Algorithm \ref{alg: prune} and Algorithm \ref{alg: mona} are $\mathcal{O}(|V_{BT}| + |E_k|)$ and $\mathcal{O}(|V_k \setminus V_{k+1}| + |\mathcal{H}| \cdot |V_{BT}| + |\mathcal{H}| \cdot |E_k|)$, respectively. Additionally, both algorithms have a space complexity of $\mathcal{O}(|G_k|)$.

\section{Experiments}

\subsection{Experimental Settings}
\textbf{Datasets.} Basic properties of used datasets are presented in Table \ref{tab: dataset}. All these datasets are collected from \cite{nr}. Note that all networks are converted to undirected and unweighted graphs without self-loops.

\begin{table}[htbp]
  \caption{Basic properties of used datasets, where $d_{avg}$ is the average degree and $k_{max}$ is the maximal core number.}
  \label{tab: dataset}
  \setlength{\tabcolsep}{0.5mm}
  \renewcommand{\arraystretch}{1.2}
  \begin{tabular*}{\hsize}{@{\extracolsep{\fill}}l|rrrrrr}
    \bottomrule[0.5mm]
    Dataset     & $|V|$     & $|E|$      & $d_{avg}$ & $k_{max}$ & $|V_{k_{max}}|$ & $|E_{k_{max}}|$ \\ \hline
    USAir       & 332       & 2,126      & 12.81     & 26        & 35          & 539         \\
    DeezerEU    & 28,281    & 92,752     & 6.56      & 12        & 71          & 564         \\
    Crawl       & 1,112,702 & 2,278,852  & 4.10      & 18        & 725         & 11,522      \\
    YouTube     & 1,134,890 & 2,987,624  & 5.27      & 51        & 845         & 36,363      \\
    Lastfm      & 1,191,805 & 4,519,330  & 7.58      & 70        & 597         & 35,153      \\
    Wikipedia   & 1,864,433 & 4,507,315  & 4.84      & 66        & 324         & 16,054      \\
    Roadnet     & 1,957,027 & 2,760,388  & 2.82      & 3         & 4,454       & 7,393       \\
    Talk        & 2,394,385 & 4,659,565  & 3.89      & 131       & 700         & 73,503      \\
    Patent      & 3,774,768 & 16,518,947 & 8.75      & 64        & 106         & 4,043       \\
    Livejournal & 4,033,137 & 27,933,062 & 13.85     & 213       & 214         & 22,791      \\
    \toprule[0.5mm]
  \end{tabular*}
\end{table}


\textbf{Baselines.} The following baseline methods are considered for evaluations and comparisons with our proposed method.

\begin{itemize}
  \item \textbf{Random} randomly removes an edge from $\mathcal{P}$ in each iteration until all nodes in $\mathcal{T}$ collapse.
  \item \textbf{Degree} removes the edge with the lowest degree (the sum of the degrees of its endpoints in $G_k$) in $\mathcal{P}$ in each iteration until all nodes in $\mathcal{T}$ collapse.
  \item \textbf{COREATTACK} \cite{zhou2022attacking} iteratively removes the edge with most followers in $G_{k_{max}}$ until all $k_{max}$-nodes collapse.
  \item \textbf{KC-Edge} \cite{10.1145/3269206.3269254} iteratively removes the edge with most followers in $G_k$ until its perturbation budget $p$ is fulfilled or $G_k$ is empty.
\end{itemize}

All programs are implemented in Python 3.7.11. All experiments are carried out on a machine equipped with Intel(R) Xeon(R) Gold 5218R CPU @2.10GHz and Linux Ubuntu 20.04.4.  Note that we select the \textbf{top-$b$} nodes with the highest degree within $G_{k_{max}}$ as our target nodes $\mathcal{T}$. And Random is performed 100 times independently for each task and the mean value is recorded.

\subsection{Effectiveness of Improved Candidate Reduction}

To evaluate the effectiveness of the improved candidate reduction proposed in Section \ref{sec: methodologies}, we first compare the number of candidate edges utilized by Optimal and MONA, i.e. $|\mathcal{P}|$ and $|\mathcal{H}|$. We implement these two algorithms in USAir and DeezerEU with $b$ varying from 2 to 10. As shown in Figure \ref{fig: candidate_a} and \ref{fig: candidate_b}, a notable distinction is observed in their candidate sizes. Moreover, to further justify the approximation guarantee of MONA, we also compare its attack performance, i.e., the number of removed edges $|\mathcal{E}|$, with that of Optimal. The results in Figures \ref{fig: delete_a} and \ref{fig: delete_b} support that MONA could achieve the optimal solution. Furthermore, we also illustrate the candidate reduction for the other datasets with the setting of $b=30$, which is shown in Figure \ref{fig: delete_all}.


\begin{figure}[h]
  \centering
  \includegraphics[width=0.4\linewidth]{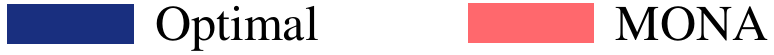}\\
  \subfigure[Candidate edges on USAir]{
    \includegraphics[width=0.46\linewidth]{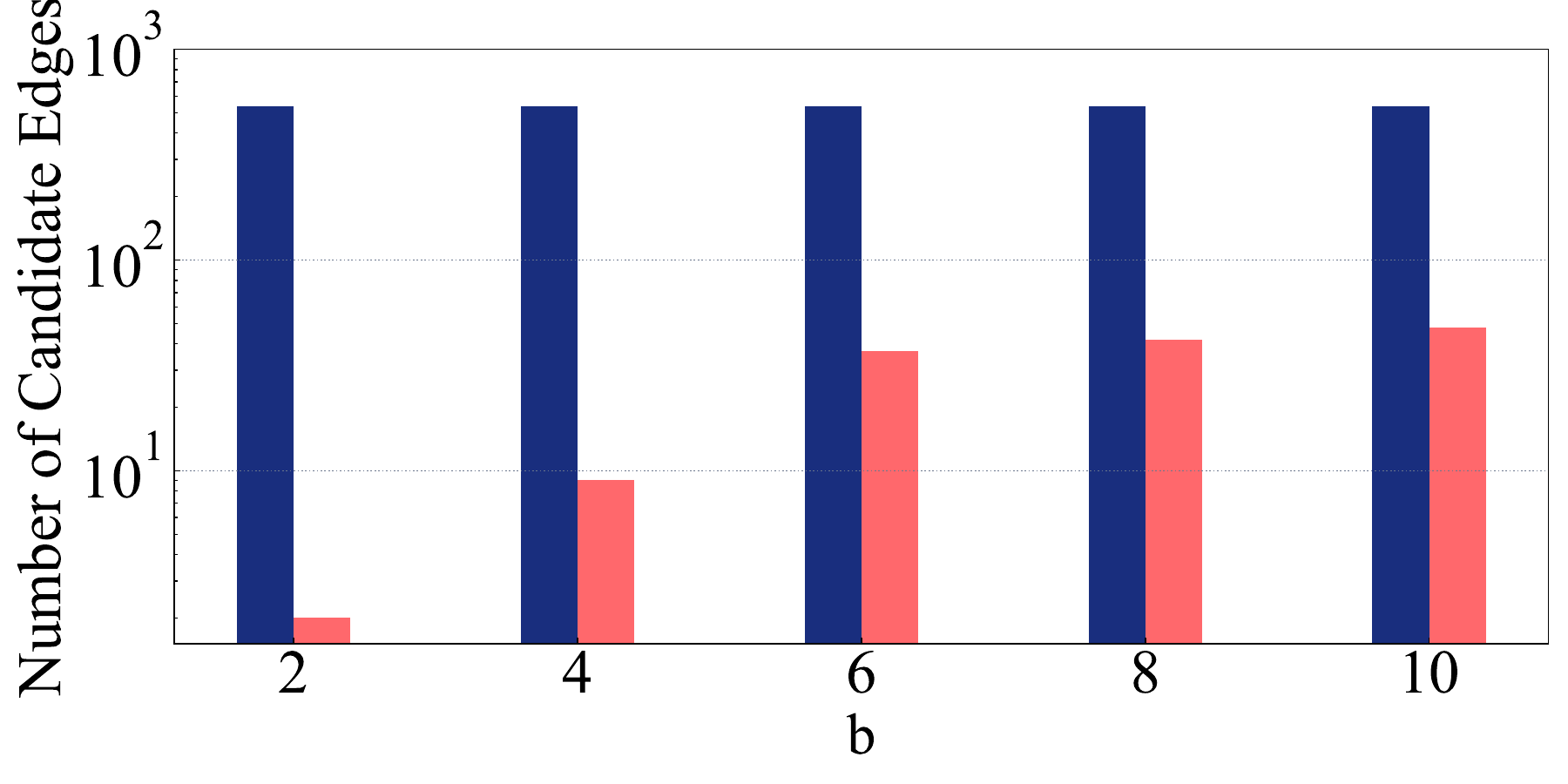}\label{fig: candidate_a}
  }
  \subfigure[Candidate edges on DeezerEU]{
    \includegraphics[width=0.46\linewidth]{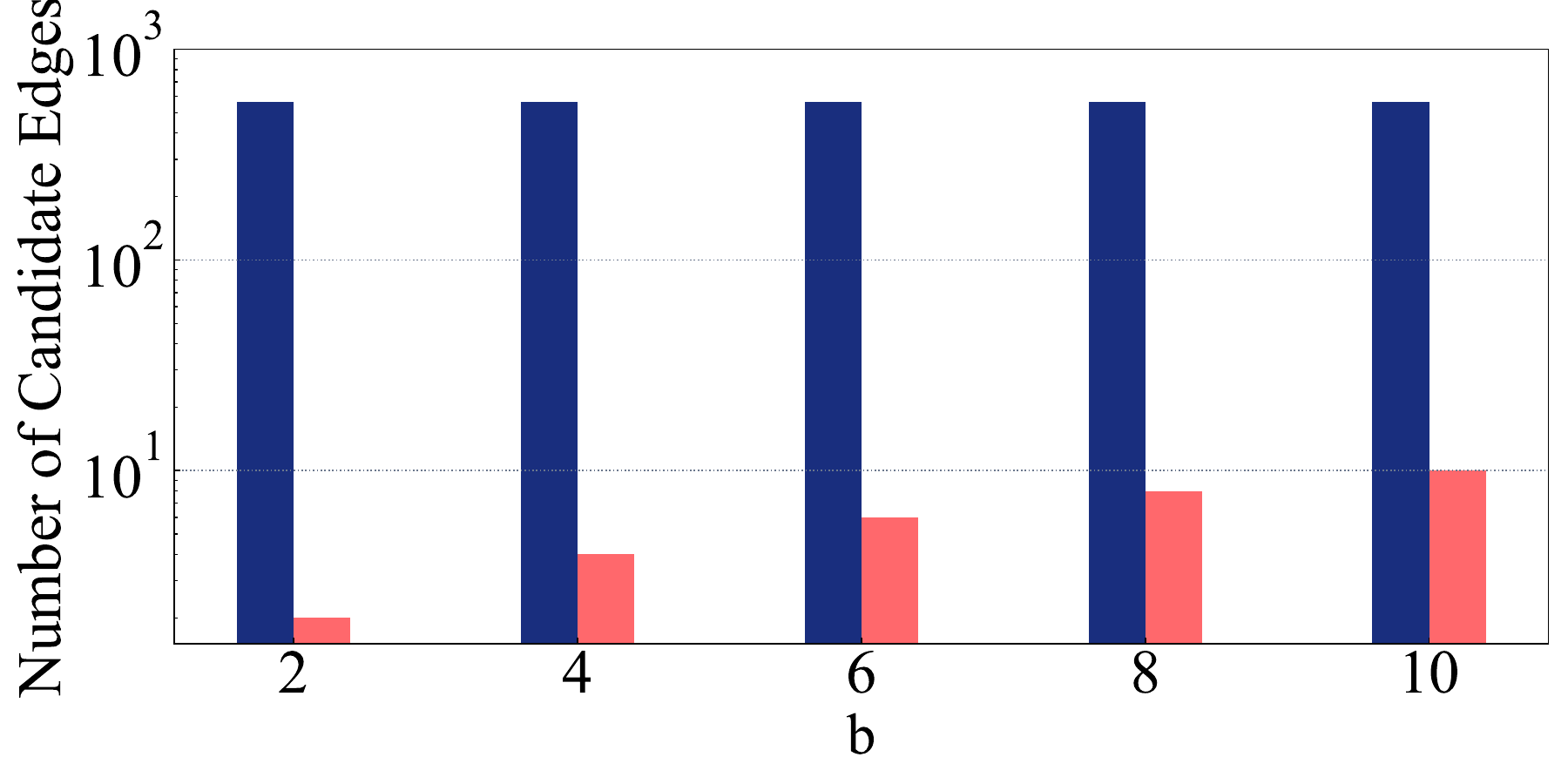}\label{fig: candidate_b}
  }
  \subfigure[Removed edges on USAir]{
    \includegraphics[width=0.46\linewidth]{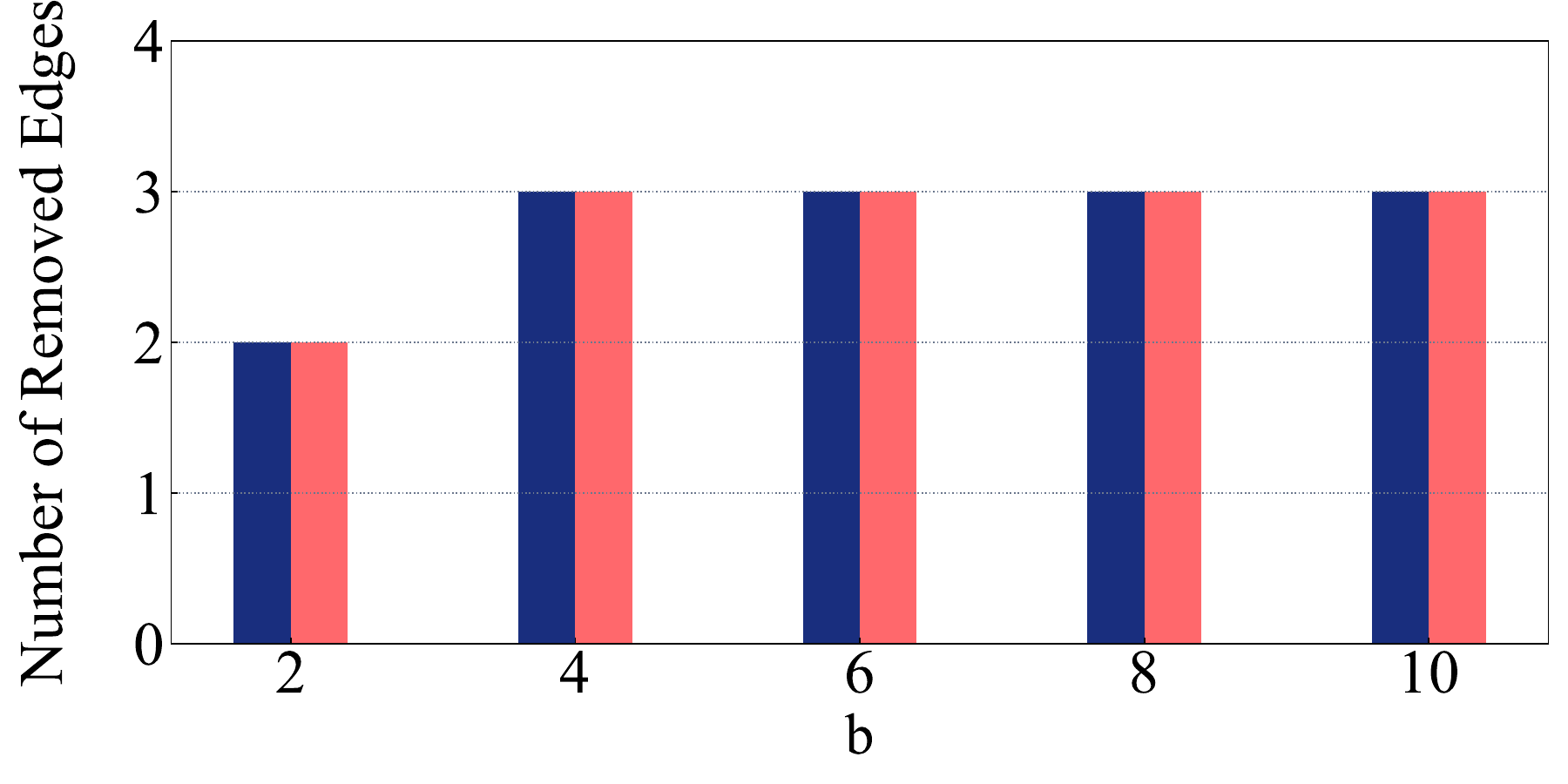}\label{fig: delete_a}
  }
  \subfigure[Removed edges on DeezerEU]{
    \includegraphics[width=0.46\linewidth]{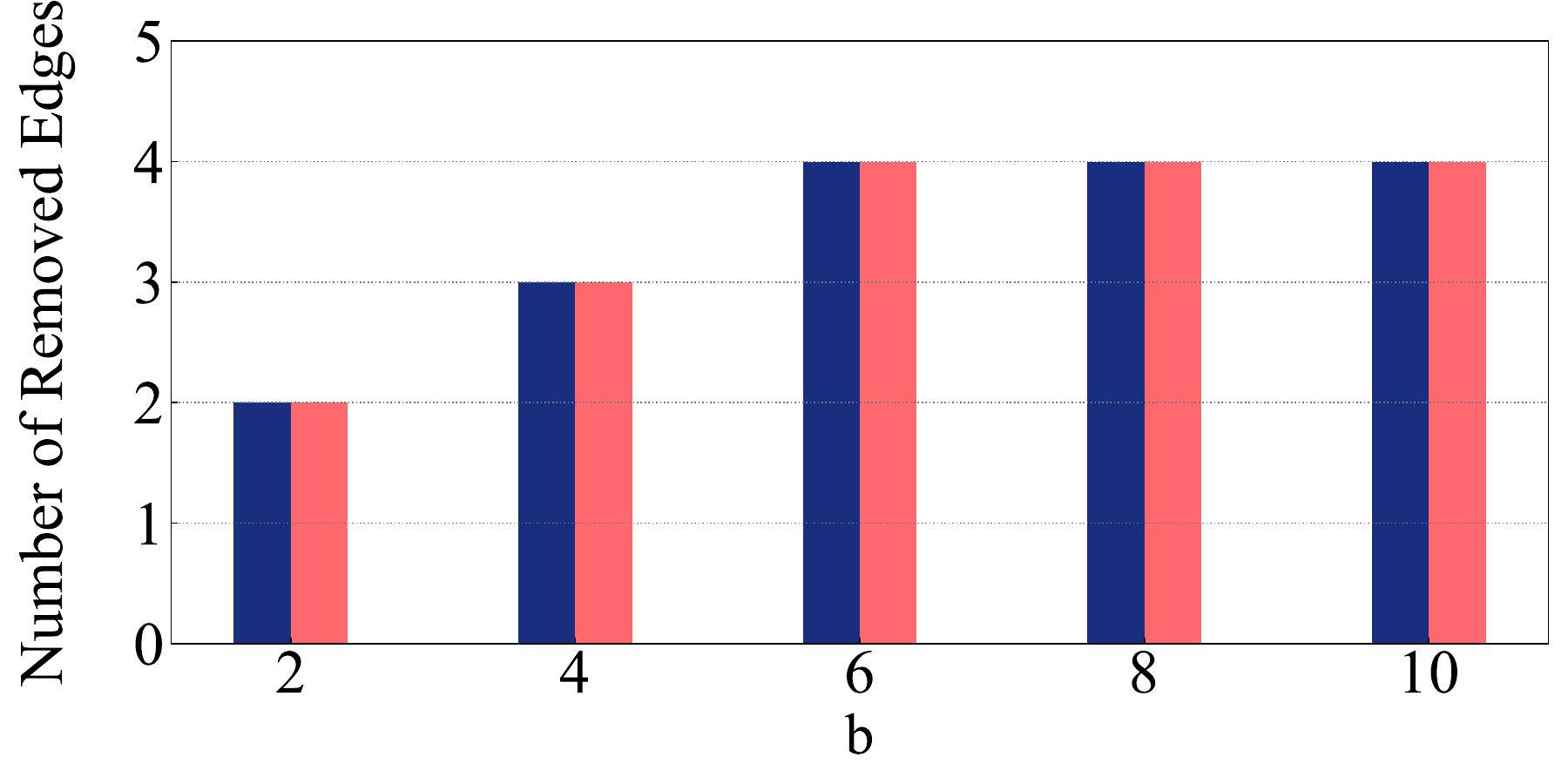}\label{fig: delete_b}
  }
  \subfigure[Candidate edges on the rest datasets with $b=30$.]{
    \includegraphics[width=0.96\linewidth]{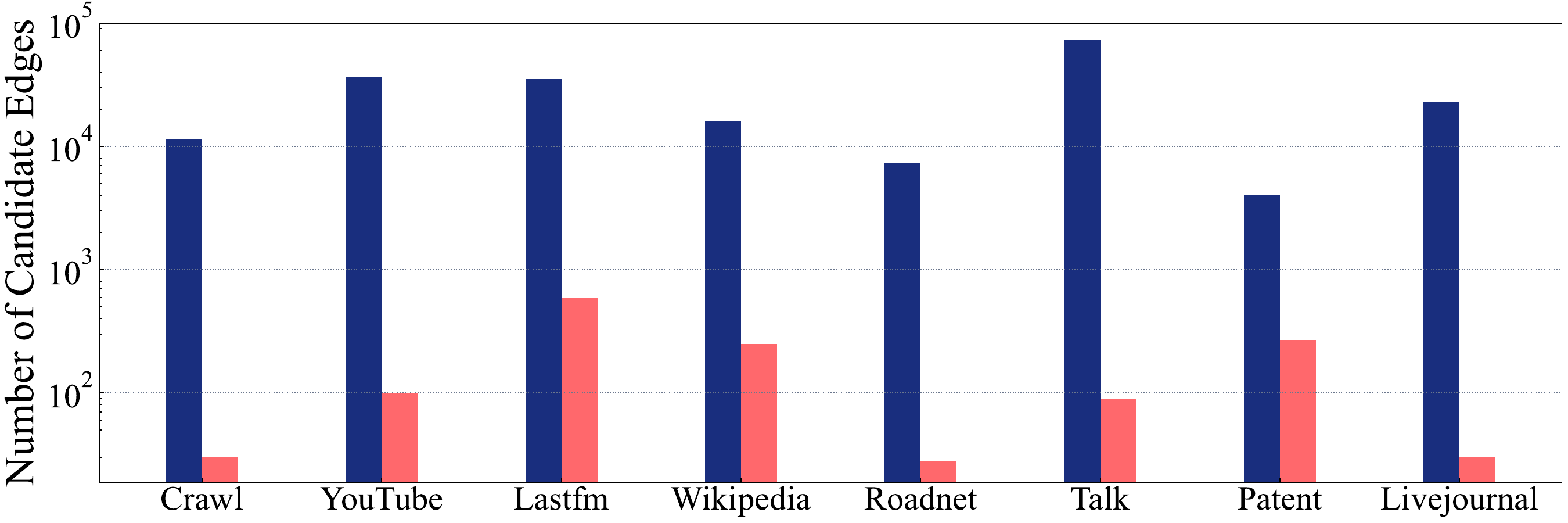}\label{fig: delete_all}
  }
  \caption{Effectiveness of improved candidate reduction with different settings of hyperparameter $b$.}
  \label{fig: comparisons}
\end{figure}


\subsection{Effectiveness of MONA Algorithm}

Furthermore, we evaluate the effectiveness of MONA algorithm with the comparisons of Random and Degree  in the setting of $b=30$ in Figure \ref{fig: effectiveness}. Specifically, the results in Figure \ref{fig: large_delete} illustrate that MONA consistently removes the fewest edges for the collapse of target nodes across all tested datasets. It is worth noting that on Livejournal, all three methods end up with only one edge removed. This is due to the fact that the $G_{k_{max}}$ of Livejournal is a complete graph, where the removal of any edge results in the collapse of all $k_{max}$-nodes. Since Random and Degree do not traverse all candidate edges to find the best one in each iteration, it is not surprising that they have a lower time complexity than MONA in Figure \ref{fig: time_cost}. However, MONA achieves more precise results than Random and Degree in acceptable time cost. Additionally, Figure \ref{fig: trend} presents the number of removed edges under different settings of $b$ on Lastfm and Wikipedia. The number of removed edges clearly increases with increasing $b$ for MONA. However, the upward trend becomes slower when $b$ reaches 30.


\begin{figure}[h]
  \centering
  \includegraphics[width=0.55\linewidth]{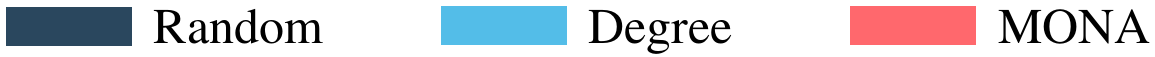}\\
  \subfigure[Comparisons of removed edges on 8 datasets]{
    \includegraphics[width=0.96\linewidth]{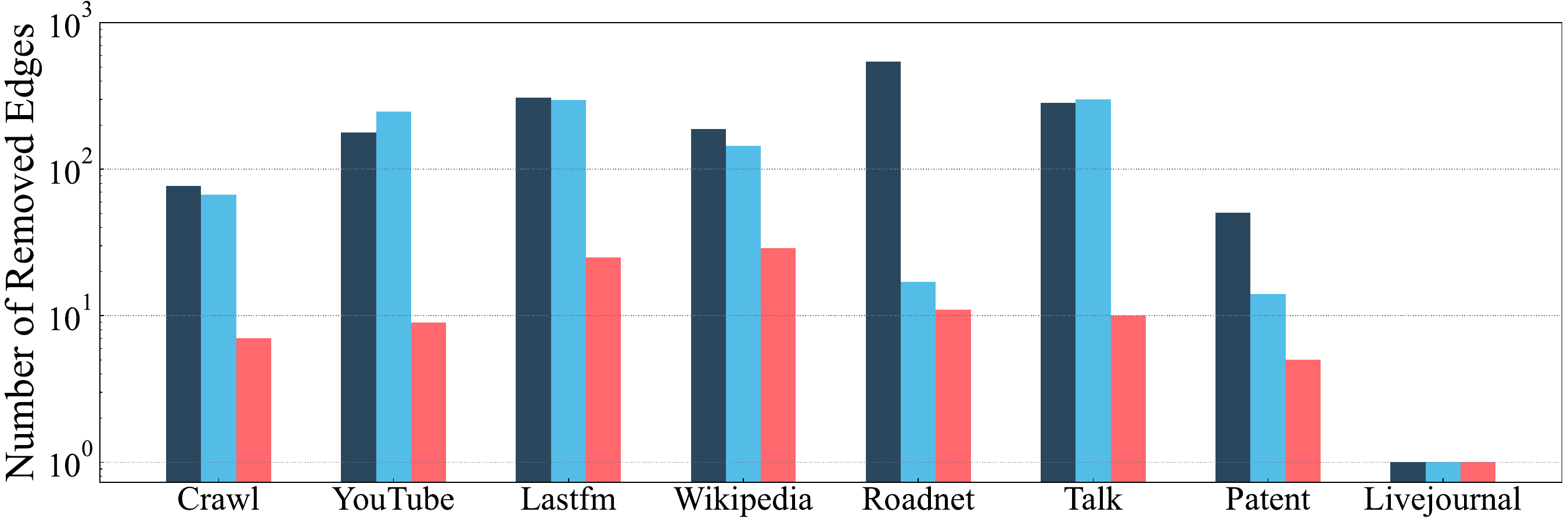}
    \label{fig: large_delete}
  }
  \subfigure[Comparisons of time cost on 8 datasets]{
    \includegraphics[width=0.96\linewidth]{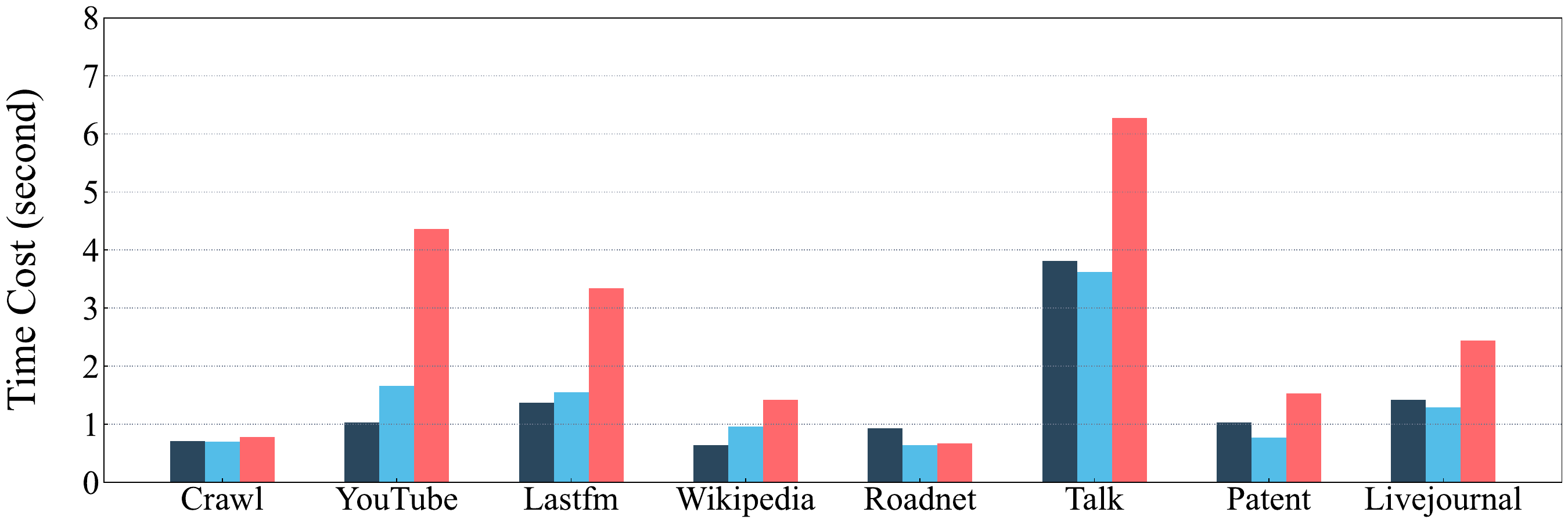}
    \label{fig: time_cost}
  }
  \caption{Effectiveness and efficiency of MONA compared with Random and Degree with the setting of $b=30$.}
  \label{fig: effectiveness}
  \vspace{-4mm}
\end{figure}

\begin{figure}[h]
  \centering
  \includegraphics[width=0.55\linewidth]{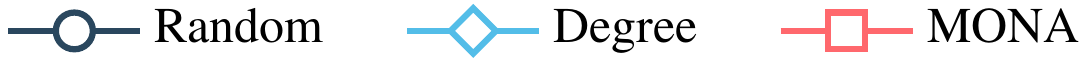}\\
  \subfigure[Lastfm]{
    \includegraphics[width=0.46\linewidth]{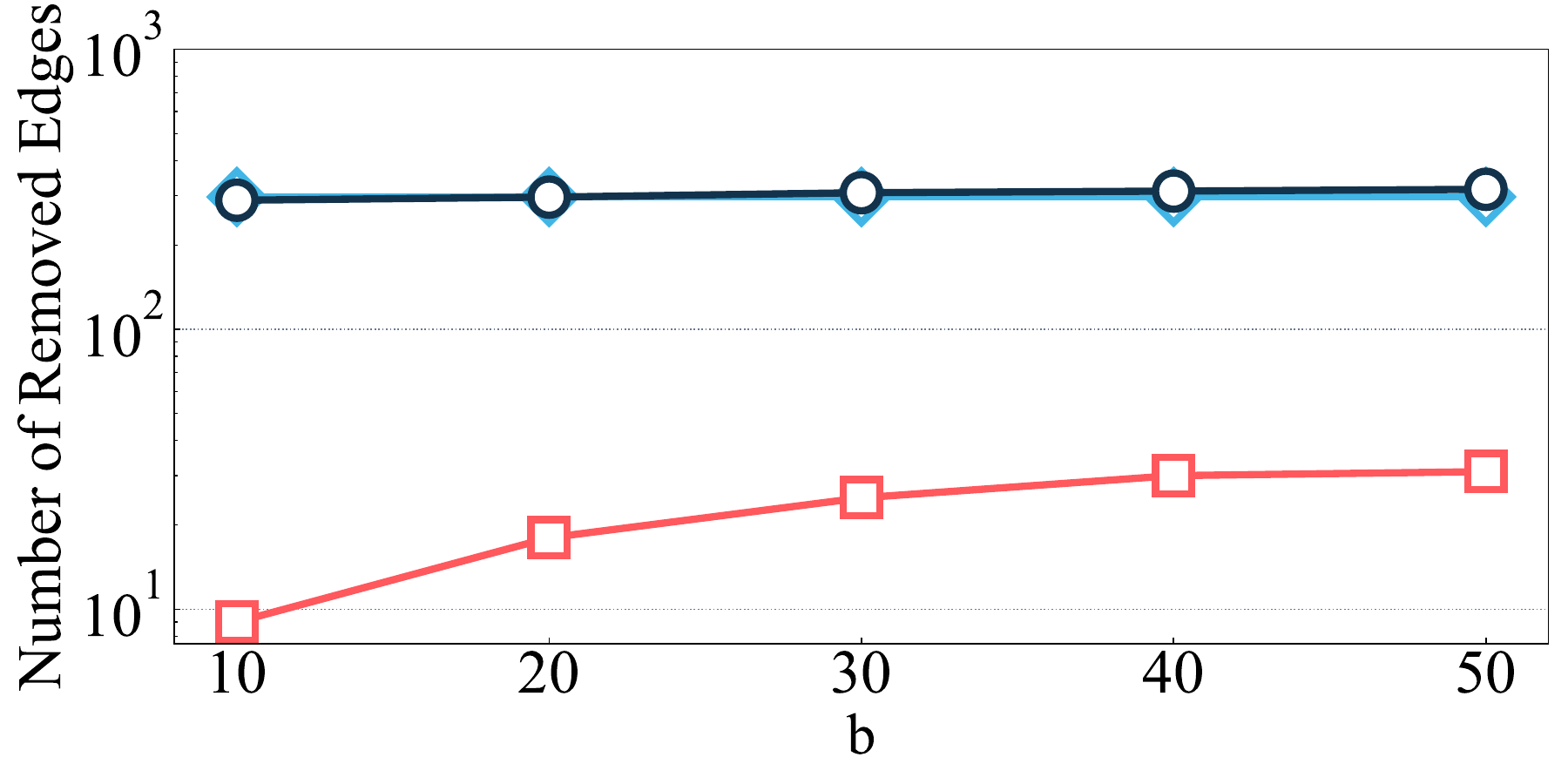}
  }
  \subfigure[Wikipedia]{
    \includegraphics[width=0.46\linewidth]{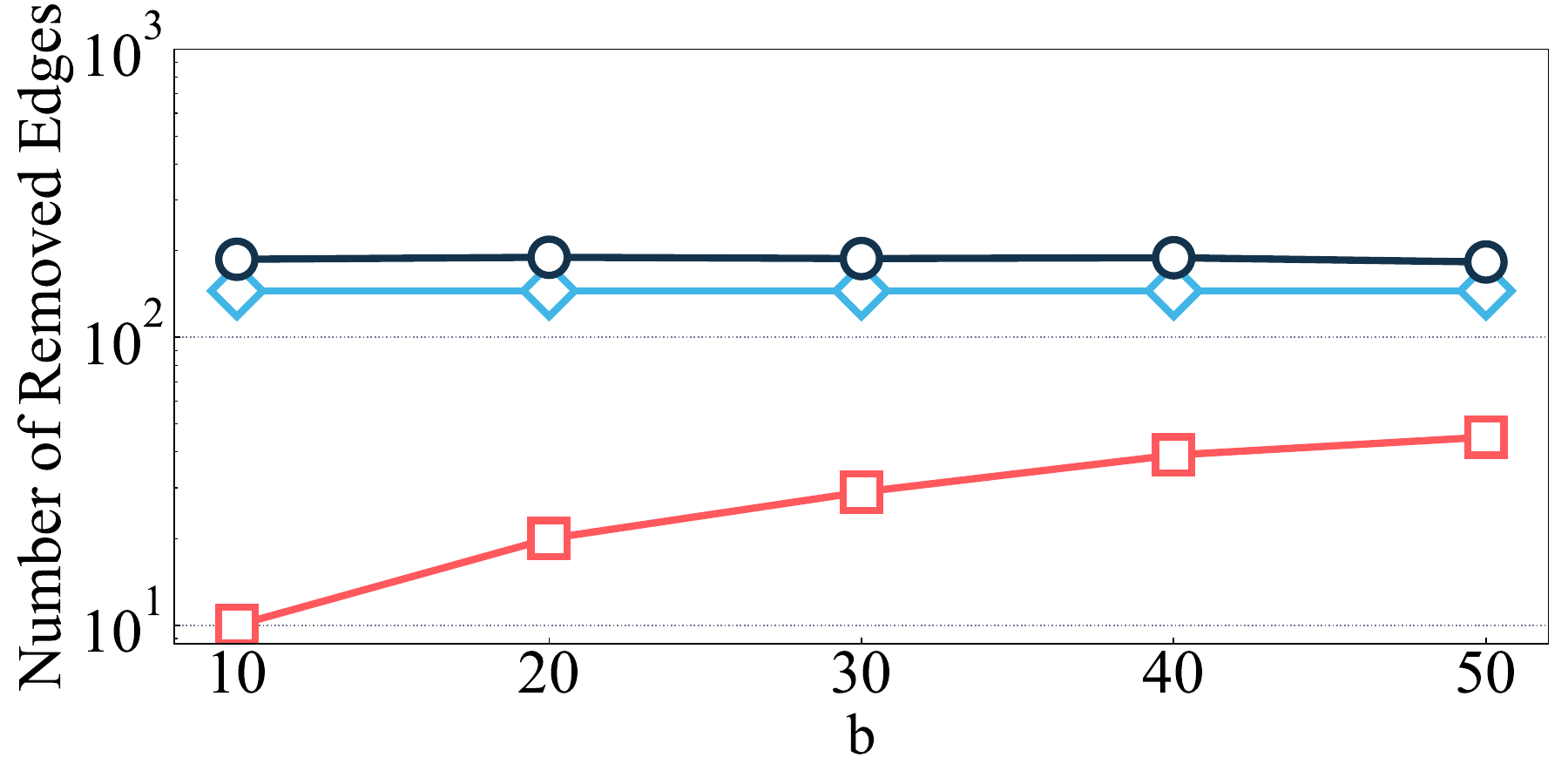}
  }
  \caption{Removed edges on Lastfm and Wikipedia with $b$ from 10 to 50.}
  \label{fig: trend}
  \vspace{-4mm}
\end{figure}

\subsection{Scalability of MONA for Global Attack}

In previous works, Zhou et al. \cite{zhou2022attacking} and Zhang et al. \cite{10.1145/3269206.3269254} respectively designed global-perspective $k$-core attack methods, namely COREATTACK and KC-Edge. By increasing the number of target nodes, we can easily extend MONA to the realm of global attacks. To compare with COREATTACK, we set $b=|E_{k_{max}}|$. And to compare with KC-Edge, we set $b=|E_{k_{max}}|$ and terminate MONA when the number of removed edges reaches $|\mathcal{E}|=p=10$. The comparative results are shown in Figure \ref{fig: scability}. In global settings, it is evident that the MONA algorithm exhibits consistent effectiveness with these global attack methods. It demonstrates the scalability of MONA for global attack.


\begin{figure}[t]
  \centering
  \includegraphics[width=0.6\linewidth]{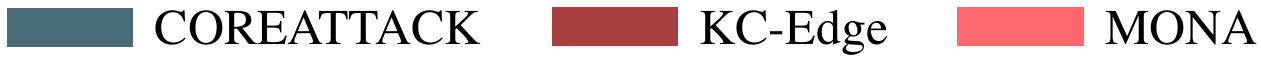}\\
  \subfigure[MONA \textit{vs} COREATTACK]{
    \includegraphics[width=0.46\linewidth]{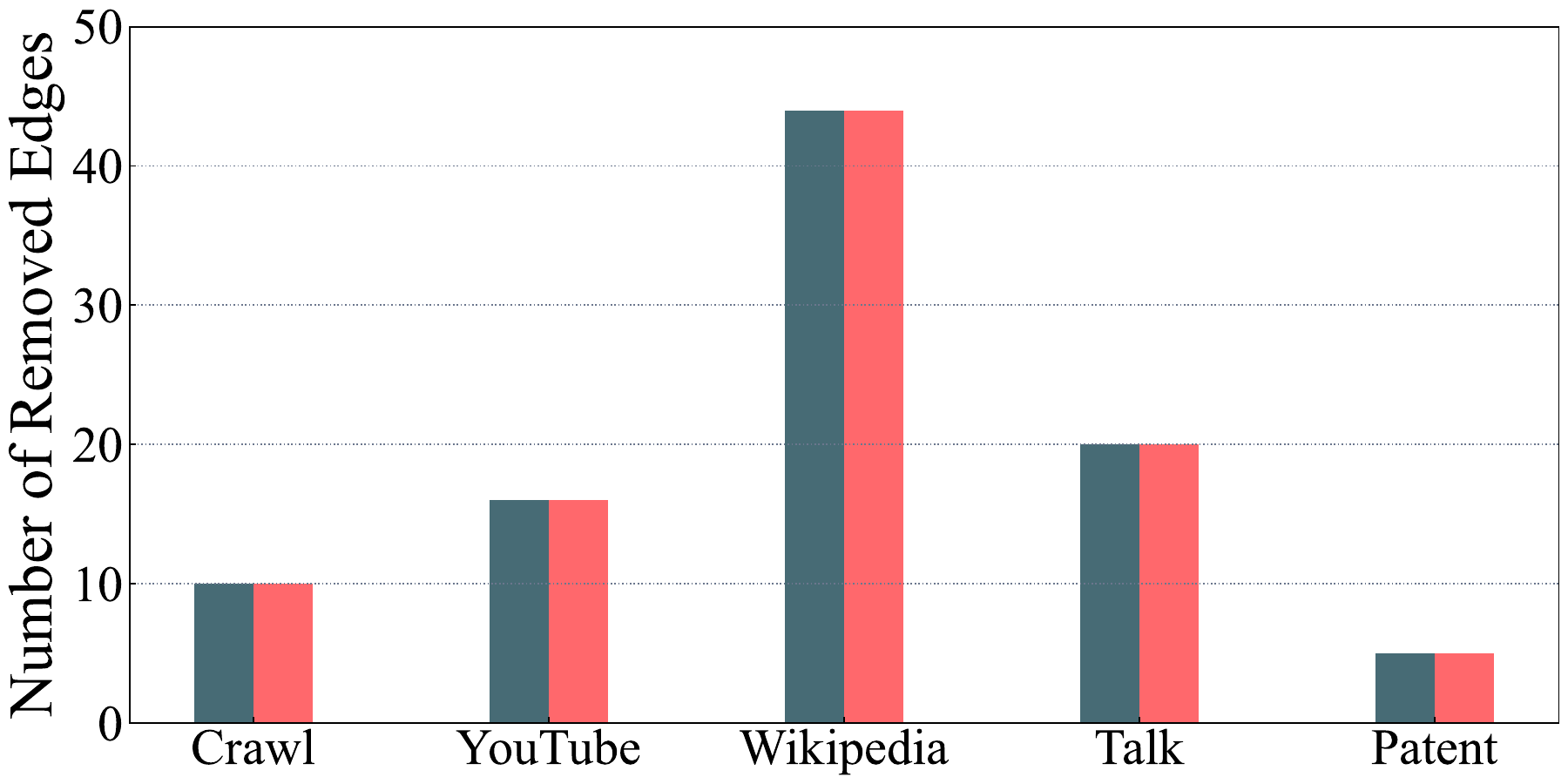}
  }
  \subfigure[MONA \textit{vs} KC-Edge, $p=10$]{
    \includegraphics[width=0.46\linewidth]{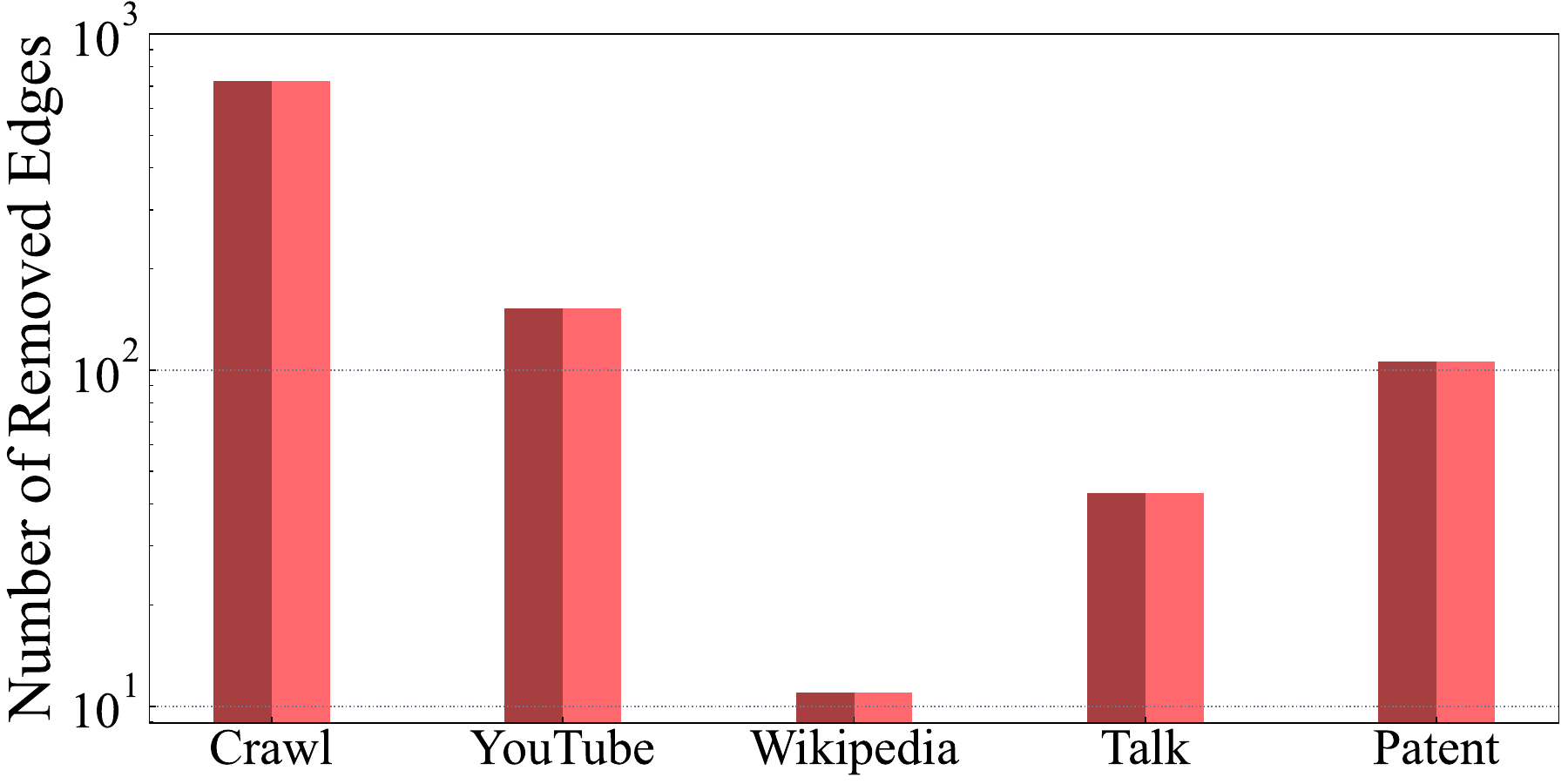}
  }
  \caption{The scalability of MONA for global attack  with the comparisons of COREATTACK and KC-Edge.}
  \label{fig: scability}
\end{figure}

\section{Conclusion}

In this paper, we propose and study TNsCP, which aims to remove a minimal-size set of edges for the collapse of target $k$-nodes. And we offer a proof of its NP-hardness. To improve time complexity, we provide a novel algorithm named MOD for candidate reduction. Furthermore, on the basis of MOD, an efficient heuristic algorithm named MONA is proposed to address TNsCP. Extensive experiments on 10 real-world datasets demonstrate the effectiveness and scalability of MONA compared to multiple baselines. 


\bibliographystyle{IEEEtran}
\bibliography{references}

\end{document}